\documentclass[12pt]{article}
\usepackage{amsmath, amsfonts, amssymb, amsthm, graphicx, geometry, arydshln, umoline, subfig, color, comment, enumerate, bm,float,multirow,booktabs,varwidth}
\IfFileExists{newtxtext.sty}{\usepackage{newtxtext,newtxmath}}{}
\usepackage{tikz}
\usetikzlibrary{matrix, positioning,fit,calc,arrows.meta}
\usepackage{appendix}
\usepackage[colorlinks=true,citecolor=blue]{hyperref}
\usepackage{natbib}
\usepackage{booktabs}
\usepackage[table]{xcolor}
\def\BibTeX{{\rm B\kern-.05em{\sc i\kern-.025em b}\kern-.08em
    T\kern-.1667em\lower.7ex\hbox{E}\kern-.125emX}}
\usepackage{makecell}

\renewcommand{\baselinestretch}{1.5}
\numberwithin{equation}{section}

\theoremstyle{definition}
\newtheorem{theorem}{Theorem}

\newcommand{\mR}{\mathbb{R}}

\newcolumntype{L}[1]{>{\raggedright\arraybackslash}p{#1}}

\newcommand{\cA}{\mathcal{A}}
\newcommand{\cL}{\mathcal{L}}
\newcommand{\cN}{\mathcal{N}}

\newcommand{\profiletable}[4]{%
\begin{tikzpicture}[baseline=(P.center)]
\matrix (P) [
matrix of nodes,
ampersand replacement=\&,
nodes={anchor=center,inner xsep=5pt,inner ysep=4pt},
row sep=0pt,
column sep=0pt,
column 1/.style={nodes={text width=#1,align=center}},
column 2/.style={nodes={text width=#2,align=center}},
row 1/.style={nodes={font=\scriptsize\bfseries,minimum height=1.25cm}}
]{
Voters \& \makecell{Order from highest\\to lowest} \\
#3 \& #4 \\
};
\draw (P-1-1.south west) -- (P-1-2.south east);
\draw (P-1-1.north east) -- (P-2-1.south east);
\end{tikzpicture}%
}

\title{Characterizing the Plurality Rule via Clone Invariance}
\author{Leo Goto\thanks{Undergraduate School of Management, Department of Business Economics, Tokyo University of Science, 1-11-2, Fujimi, Chiyoda-ku, Tokyo, 102-0071, Japan. Email: leogotodeb@gmail.com.}}
\date{August 19, 2026}

\begin{document}
\maketitle

\begin{abstract}
The Plurality rule is vulnerable to vote splitting between similar alternatives.
Defying conventional wisdom, we characterize the Plurality rule through a specific type of clone independence condition.
We further introduce the notion of a one-parameter family of \textit{$p$-Clone Resistance} (\textit{$p$-CR}), and examine the property within the class of scoring rules.
For $p \,\in (0,1)$, nontrivial rules exist with at most three alternatives, but only the Trivial rule generally satisfies the condition.

\medskip
\noindent\textit{JEL classification}: D71

\noindent\textit{Keywords}: Plurality rule, Independence of Clones, positional scoring rules, variable alternatives
\end{abstract}

\section{Introduction}
An election can have a different winner simply because a losing alternative enters.
Under the Plurality rule, two similar alternatives may divide the same pool of first-place support, allowing a rival to win.
This is called the \textit{spoiler effect}, and it raises a basic question: which voting rules are immune to the entry of a similar alternative, and how does the answer depend on the way voters rank the similar pair?
This question motivates \textit{Independence of Clones}, introduced by \citet{tideman1987} to formalize this vulnerability of the Plurality rule, and the recent analysis of approximate clones by \citet{delemazure2026}.

Consider a situation where a similar alternative $x'$ of $x$ is available.
By adding $x'$, we suppose that $x$ and $x'$ take adjacent positions in every ranking.
Among voters with the same ranking over the original alternatives, a fixed fraction $p$ places $x$ above $x'$.
\textit{$p$-Clone Resistance} (\textit{$p$-CR}) preserves choices among the original alternatives, and \textit{$p$-Clone Invariance} (\textit{$p$-CI}) requires the entire choice set to remain unchanged.
At $p \,=\, 1$, \textit{$p$-Clone Invariance}, together with \textit{Anonymity}, \textit{Neutrality}, and \textit{Consistency}, characterizes the Plurality rule.

We then examine positional scoring rules satisfying \textit{$p$-CR}.
With at most three alternatives, nontrivial rules exist for every $p$. Their two-alternative restriction is Plurality for $p \,> 1/2$ and inverse Plurality for $p \,< 1/2$, with either at $p \,=\, 1/2$.
For $p \,\in (0,1)$, no nontrivial rule extends to four alternatives.
At the endpoints, Plurality rule satisfies the condition for $p \,=\, 1$, and inverse Plurality rule for $p \,=\, 0$.

The Plurality rule has also been characterized by \textit{Independence of dominated alternatives} \citep{richelson1978,ching1996}, \textit{Tops-only} conditions with \textit{Efficiency} or \textit{Faithfulness} \citep{yeh2008,sekiguchi2012}, and \textit{Fixed-population monotonicity} \citep{kellyqi2016}.
In a $1$-clone extension, $x'$ is ranked immediately below $x$ by every voter.
It is thus unanimously dominated by $x$ and inherits $x$'s comparison with every other alternative.
\textit{Independence of dominated alternatives} therefore implies \textit{$1$-CI}, and Theorem~\ref{thm:plurality} recovers the characterization in \citet{ching1996} from a formally weaker requirement.
Other research on adjacent clones consider ranked pairs \citep{zavisttideman1989}, composition and cloning consistency \citep{laffondetal1996,laslier2000,ozturk2020}, manipulation \citep{elkindetal2011}, and approximate clones \citep{delemazure2026}.

\section{Model}

Let $\cN$ and $\cA$ be infinite sets of potential voters and alternatives. Let $N \,\subset \cN$ and $A \,\subset \cA$ be finite and nonempty. A strict linear order $\succ_i$ on $A$ is asymmetric, transitive, and complete on distinct alternatives. Let $\cL(A)$ denote the set of all such orders. A preference profile is $\succ \,=\, (\succ_i)_{i \,\in N} \,\in \cL(A)^N$.

For $B \,\subseteq A$, the restriction of $\succ$ to $B$ is
\begin{equation}\label{eq:restriction}
\left.\succ\right|_B \,=\, (\succ_i^B)_{i \,\in N},
\qquad
a\succ_i^B b\quad \Longleftrightarrow\quad a\succ_i b
\quad\text{for all }a,b \,\in B.
\end{equation}
The rank of $a \,\in A$ in voter $i$'s order is
\begin{equation}\label{eq:rank}
r_i(a,\succ) \,=\, 1+\bigl|\{b \,\in A\mid b\succ_i a\}\bigr|.
\end{equation}
A social choice rule $F$ assigns a nonempty set $F(\succ) \,\subseteq A$ to every profile for every finite $N$ and $A$.

We also define the transformations used below. For a bijection $\pi:N\to N'$, the relabelled profile $\pi\succ$ satisfies $(\pi\succ)_{\pi(i)} \,=\, \succ_i$. For a permutation $\sigma:A\to A$, the profile $\sigma\succ$ satisfies
\[
\sigma(a)\mathrel{(\sigma\succ_i)}\sigma(b)
\quad\Longleftrightarrow\quad
a\succ_i b.
\]
For profiles $\succ \,\in \cL(A)^N$ and $\succ' \,\in \cL(A)^{N'}$ with $N\cap N' \,=\, \emptyset$, $\succ\cup\succ'$ denotes the profile on $N\cup N'$ that agrees with each constituent profile on its electorate.

Fix $p \,\in \mathbb Q\cap[0,1]$, $x \,\in A$, and $x' \,\notin A$. For $\rho \,\in \cL(A)$, let $n_\rho(\succ) \,=\, |\{i \,\in N\mid\succ_i \,=\, \rho\}|$. A profile $\widetilde\succ \,\in \cL(A\cup\{x'\})^N$ is a \textit{$p$-clone extension} of $x$ at $\succ$ if
\begin{align}
\left.\widetilde\succ\right|_A&\,=\,\succ,\label{eq:clone-restriction}\\
\bigl|r_i(x,\widetilde\succ)-r_i(x',\widetilde\succ)\bigr|&\,=\,1
\quad\text{for every }i \,\in N,\label{eq:clone-adjacency}\\
\bigl|\{i \,\in N\mid \succ_i \,=\, \rho\text{ and }x\mathrel{\widetilde\succ_i}x'\}\bigr|
&\,=\,p\,n_\rho(\succ)
\quad\text{for every }\rho \,\in \cL(A).\label{eq:clone-share}
\end{align}
Thus, the extension preserves all original comparisons, makes $x$ and $x'$ adjacent, and fixes their relative order among voters having the same original order $\rho$. Such an extension exists only when every $p\,n_\rho(\succ)$ is an integer. We multiply all voter counts by a common positive integer when necessary. Figure~\ref{fig:binary-extension} illustrates the construction.

\begin{figure}[H]
\centering
\begin{minipage}[c]{0.4\textwidth}
\centering
\small Original profile\par\medskip
\profiletable{1.35cm}{3.00cm}
{\(\begin{array}{c}u\\v\end{array}\)}
{\(\begin{array}{cc}\colorbox{blue!14}{$x$}&y\\y&\colorbox{blue!14}{$x$}\end{array}\)}
\end{minipage}
\hfill
\raisebox{0.15cm}{\tikz\draw[-{Stealth[length=2.2mm]},thick] (0,0)--(0.75,0);}
\hfill
\begin{minipage}[c]{0.47\textwidth}
\centering
\small $p$-clone extension\par\medskip
\profiletable{1.85cm}{3.15cm}
{\(\begin{array}{c}pu\\(1-p)u\\pv\\(1-p)v\end{array}\)}
{\(\begin{array}{ccc}
\colorbox{blue!14}{$x$}&\colorbox{orange!18}{$x'$}&y\\
\colorbox{orange!18}{$x'$}&\colorbox{blue!14}{$x$}&y\\
y&\colorbox{blue!14}{$x$}&\colorbox{orange!18}{$x'$}\\
y&\colorbox{orange!18}{$x'$}&\colorbox{blue!14}{$x$}
\end{array}\)}
\end{minipage}
\caption{A two-alternative profile and its $p$-clone extension.}
\label{fig:binary-extension}
\end{figure}

\subsection{Axioms and scoring rules}

\begin{itemize}
\item[] \textbf{\textit{Anonymity}.} For every profile $\succ$ and every bijection $\pi:N\to N'$, $F(\pi\succ) \,=\, F(\succ)$.
\item[] \textbf{\textit{Neutrality}.} For every profile $\succ$ and every permutation $\sigma:A\to A$, $F(\sigma\succ) \,=\, \sigma(F(\succ))$.
\item[] \textbf{\textit{Consistency}.} For profiles $\succ$ and $\succ'$ on the same alternatives and disjoint electorates, if $F(\succ)\cap F(\succ') \,\ne\, \emptyset$, then
\[
F(\succ\cup\succ') \,=\, F(\succ)\cap F(\succ').
\]
\item[] \textbf{\textit{$p$-Clone Resistance} (\textit{$p$-CR}).}
For every profile $\succ$, every $x \,\in A$, and every $p$-clone extension $\widetilde\succ$ of $x$ at $\succ$,
\begin{equation}\label{eq:cei}
F(\widetilde\succ)\cap A \,=\, F(\succ).
\end{equation}
\item[] \textbf{\textit{$p$-Clone Invariance} (\textit{$p$-CI}).}
For every profile $\succ$, every $x \,\in A$, and every $p$-clone extension $\widetilde\succ$ of $x$ at $\succ$,
\begin{equation}\label{eq:pic}
F(\widetilde\succ) \,=\, F(\succ).
\end{equation}
\end{itemize}

The last condition strengthens \textit{$p$-CR} by excluding the entrant.

A positional scoring rule $F^s$ has a vector $s^m \,=\, (s_1^m,\ldots,s_m^m) \,\in \mR^m$ for each number $m$ of alternatives. The total score of $a \,\in A$ is
\begin{equation}\label{eq:score}
S_a(\succ) \,=\, \sum_{i \,\in N}s^{|A|}_{r_i(a,\succ)},
\end{equation}
and the rule selects
\begin{equation}\label{eq:scoring-rule}
F^s(\succ) \,=\, \{a \,\in A\mid S_a(\succ) \,\ge\, S_b(\succ)\text{ for every }b \,\in A\}.
\end{equation}
Two vectors of the same dimension are equivalent if one is obtained from the other by adding a common constant and multiplying by a positive number. Equivalent vectors induce the same rule. The Trivial rule uses constant vectors. A nontrivial rule is any rule other than the Trivial rule.
Plurality rule, inverse Plurality rule, Borda, and inverse Borda use $(1,0,\ldots,0)$, $(0,\ldots,0,1)$, $(m-1,m-2,\ldots,0)$, and $(0,1,\ldots,m-1)$, respectively.

A simple scoring rule uses one score vector. A composite scoring rule uses a finite ordered list and applies each vector only to alternatives tied under every preceding vector. We call the vectors in this list its component vectors.

\section{Results}

The following theorem characterizes the Plurality rule.
\begin{theorem}\label{thm:plurality}
Let $\bar m \,\ge\, 3$. A social choice rule defined for every finite electorate and every set containing at most $\bar m$ alternatives satisfies \textit{Anonymity}, \textit{Neutrality}, \textit{Consistency}, and \textit{$1$-CI} if and only if it is the Plurality rule.
\end{theorem}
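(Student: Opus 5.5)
The plan is to invoke Young's characterization of \emph{Anonymity}, \emph{Neutrality} and \emph{Consistency} separately for each number of alternatives, and then use \textit{$1$-CI} to tie consecutive sizes together and force Plurality at every level.

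\emph{Sufficiency.} \emph{Anonymity}, \emph{Neutrality} and \emph{Consistency} are standard for any positional scoring rule. For \textit{$1$-CI}, note that in a $1$-clone extension every voter places $x'$ immediately below $x$. Hence $x'$ is never ranked first, and the first-place counts of all alternatives in $A$ are unchanged. The Plurality score of $x'$ is $0$, while some alternative has positive score because $N\neq\emptyset$. So $x'\notin F(\widetilde\succ)$ and $F(\widetilde\succ)=F(\succ)$.

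\emph{Necessity: Young's representation.} For each fixed finite $A$ with $|A|=m\le\bar m$, the restriction of $F$ to profiles on $A$ (over all finite electorates) satisfies \emph{Anonymity}, \emph{Neutrality} and \emph{Consistency}. By Young's theorem it is therefore a composite scoring rule with component vectors $s^{m,1},s^{m,2},\dots$. By \emph{Neutrality}, these vectors depend only on $m$. The case $m=1$ is trivial. For $m=2$, every nontrivial neutral, consistent rule is majority, which is Plurality. A trivial first component for $m=2$ would be excluded by the lower-level argument below applied to a single-voter profile. The task is then to show, for each $2\le m<\bar m$, that the rules at sizes $m$ and $m+1$ both reduce to Plurality. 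This is where $\bar m\ge3$ is needed: it guarantees that at least one transition $m\to m+1$ is available.

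\emph{Necessity: adjacent score differences.} I would compare $S_a$ and $S_b$ in the extension for carefully chosen profiles on $A$ in which $a$ and $b$ tie by symmetry and are both winners. For such profiles, \textit{$1$-CI} forces them to remain tied in the extension and to beat $x'$. The basic gadget is two voters who swap the ranks $j<k$ of $a$ and $b$, with the cloned alternative $x\notin\{a,b\}$ placed between them for one voter and above both for the other. After cloning, the scores in the extension become $s_j+s_{k+1}$ for $a$ and $s_{k+1}+s_{j+1}$ for $b$. To make $a,b$ winners, I pad the profile with many copies of a neutral profile that puts $a,b$ strictly ahead, for instance cyclic profiles with $a,b$ on top. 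Equal scores then give $s^{m+1}_j=s^{m+1}_{j+1}$ for every $j\ge2$.

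\emph{Necessity: the first gap.} Next I need $s^{m+1}_1>s^{m+1}_2$. Take a one-voter profile with top $x$; its extension has $x'$ second. \textit{$1$-CI} requires that $x'$ not be chosen, which is impossible if $s_1=s_2$. Hence the first component vector at size $m+1$ is equivalent to $(1,0,\dots,0)$. An analogous argument at size $m$, using the clone to change the positions seen by $a$ and $b$ at the lower level, shows the same for size $m$.

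\emph{Necessity: later components.} Finally, I would show that the later component vectors never break Plurality ties. Given any Plurality tie at size $m$, cloning preserves it by \textit{$1$-CI}. Varying which alternative is cloned then yields profiles where any tie-breaking by a later component would differ between $\succ$ and $\widetilde\succ$. Thus $F$ coincides with Plurality.

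The main obstacle is the padding and position-shifting construction in the adjacent-differences step. It must keep $a,b$ simultaneously tied and winning under an \emph{a priori} composite rule, while producing scores that isolate a single difference $s_j-s_{j+1}$. I would first attempt it with a large multiple of the full cyclic profile plus one copy of the two-voter gadget. I would then check that the lower-order components cannot rescue a violation.
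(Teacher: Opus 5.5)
\textbf{There is a genuine gap in your adjacent-differences step.} Your whole necessity argument rests on this step, and the gadget you describe cannot deliver the equalities you claim.

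For the gadget to exist, $x$ must sit strictly above rank $j$ for one voter and strictly between ranks $j$ and $k\le m$ for the other. So it exists only for $2\le j\le m-2$. It gives at most $s^{m+1}_2=\cdots=s^{m+1}_{m-1}$. The terms $s^{m+1}_{k+1}$ appear on both sides and cancel, so the bottom entries $s^{m+1}_m$ and $s^{m+1}_{m+1}$ are never constrained. For $m\le 3$ the gadget yields nothing at all. At the transition $2\to3$, the only nontrivial one when $\bar m=3$, it cannot even be built, because $A\setminus\{x\}$ contains a single alternative.

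Concretely, majority on two alternatives combined with Borda $(2,1,0)$ on three survives every step you actually carry out. The one-voter test $x\succ x'\succ y$ elects $x$, and $s_1>s_2$ holds. Yet this rule violates $1$-CI: with two voters $x\succ y$ and three $y\succ x$, majority picks $y$, while Borda on the extension picks $x$ (scores $7$ against $6$). This violation compares the cloned alternative with another one, which is exactly what your restriction $x\notin\{a,b\}$ excludes.

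\textbf{The missing idea} is to compare the cloned alternative itself with the others, in a profile where every alternative is chosen whatever the unknown rule is. For the step $2\to3$, take the voters $x\succ y$ and $y\succ x$. They tie under every vector, so both are chosen. After cloning, $x$ scores $s_1+s_2$ and $y$ scores $s_1+s_3$ in every component. So $1$-CI forces $s_2=s_3$, and keeping $x'$ out then forces $s_1>s_2$ at the first nonconstant component.

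The paper generalizes this with the cyclic profile on $a_1,\dots,a_k$, in which every alternative ties under every component, so no padding is needed. After cloning $a_1$, it compares consecutive $a_j$, and then $a_1$ with $a_2$. This forces every component vector into the form $(a,c,\dots,c)$, and comparing $a_1$ with $a_1'$ gives $a>c$ at the first nonconstant component. The same computation also settles your vague later-components step immediately, since such vectors depend only on first-place counts and cannot break a Plurality tie.

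\textbf{A smaller error.} Your claim that every nontrivial anonymous, neutral, consistent rule on two alternatives is majority is false. The minority rule $(0,1)$ also qualifies, and it has to be excluded by the one-voter cloning argument, as the paper does.
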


The proof uses a complete classification of positional scoring rules shown in Theorem~\ref{thm:cei}.
For $m \,\ge\, 2$, define
\[
d_p^m \,=\, (1,\underbrace{1-p,\ldots,1-p}_{m-2},0),\qquad
e_p^m \,=\, (0,\underbrace{p,\ldots,p}_{m-2},1).
\]

The following result categorizes all positional scoring rules satisfying \textit{$p$-CR} for each $p$.
\begin{theorem}\label{thm:cei}
Let $p \,\in \mathbb Q\cap[0,1]$, and let a positional scoring rule be defined for every number of alternatives up to $\bar m \,\ge\, 3$.
\begin{enumerate}[(i)]
\item If $\bar m \,=\, 3$, the rule satisfies \textit{$p$-CR} if and only if it is Trivial, uses $(1,0)$ and $d_p^3$ when $p \,>\, 1/2$, uses either $(1,0)$ and $d_{1/2}^3$ or $(0,1)$ and $e_{1/2}^3$ when $p \,=\, 1/2$, or uses $(0,1)$ and $e_p^3$ when $p \,<\, 1/2$.
\item If $\bar m \,\ge\, 4$ and $p \,\in (0,1)$, only the Trivial rule satisfies \textit{$p$-CR}.
\item If $\bar m \,\ge\, 3$, the rules satisfying \textit{$1$-CR} are the Plurality rule and the Trivial rule, while those satisfying \textit{$0$-CR} are the inverse Plurality rule and the Trivial rule.
\end{enumerate}
\end{theorem}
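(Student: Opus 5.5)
The plan is to turn \textit{$p$-CR} into linear conditions linking consecutive score vectors $s^m$ and $s^{m+1}$, and then to work upward from $m=2$. The starting observation is this. Take a voter with order $\rho$ in which $x$ has rank $j$, and clone $x$. Alternatives above $x$ keep their rank, and alternatives below $x$ move down one place. The voter's contribution to $x$ becomes $p\,s^{m+1}_j+(1-p)\,s^{m+1}_{j+1}$ on average over the voters with order $\rho$, and the contribution to $x'$ is the complementary mixture. So the extended scores of all alternatives are linear in the type counts $n_\rho(\succ)$. Since $F(\succ)\ne\emptyset$, \textit{$p$-CR} requires two things. First, the argmax of $S^{m}$ over $A$ must coincide with the argmax over $A$ of the extended scores. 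Second, $x'$ must never strictly beat every original alternative.

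The first step uses profiles that are tied under $s^m$. These give linear equalities on $s^{m+1}$. Profiles with a strict winner give strict inequalities, which fix the orientation.

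For $m=2$, the only rules up to equivalence are $(1,0)$, $(0,1)$ and the constant rule. Take $s^2=(1,0)$ and the tied profile consisting of one voter $x\succ y$ and one voter $y\succ x$. Write $s^3=(u,v,w)$. Equating the extended scores of $x$ and $y$ gives $v=(1-p)u+pw$. The one-voter profile $x\succ y$, with $y$ cloned, gives $u>p v+(1-p)w$, so after normalization $u=1$, $w=0$ and $s^3=d_p^3$. The case $s^2=(0,1)$ gives $e_p^3$ symmetrically.

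Next I will check the general two-alternative profile with $n_{xy}$ voters $x\succ y$ and $n_{yx}$ voters $y\succ x$, cloning either alternative. The clone's extended score, for instance $n_{xy}\bigl(p(1-p)+(1-p)\bigr)+n_{yx}(1-p)^2$ under $d_p^3$, must not strictly exceed the maximum original score. This is where the threshold $p\ge 1/2$ for $d_p^3$, and $p\le 1/2$ for $e_p^3$, should appear. Conversely, if $s^2$ is constant, then $F(\succ)=A$ always. The extension must then tie all originals for every profile, which forces $s^3$ to be constant. Similarly, a constant $s^3$ forces $s^2$ constant, because otherwise $F(\succ)\ne A$ for some profile. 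Sufficiency in part (i) is a direct check that the displayed pairs satisfy both requirements for all two-alternative profiles. This completes (i).

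For (ii), triviality propagates along the chain in both directions by the argument just given, so by (i) it suffices to show that $s^3=d_p^3$ (or $e_p^3$) with $p\in(0,1)$ admits no valid $s^4$. I will collect linear equations on $s^4=(a,b,c,d)$ from tied three-alternative profiles. The Condorcet cycle ties under any $s^3$, and cloning each of its alternatives in turn gives equations. Reversal pairs tie only when $p=1/2$, so for general $p$ I will instead combine an order with suitably weighted reversed and rotated orders to produce ties under $d_p^3$. The aim is to show that these equations force $a=b=c=d$. A constant $s^4$ then contradicts the nontriviality of $s^3$ by the propagation argument.

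This overdetermination step is the main obstacle. I would first try cloning the middle-ranked and bottom-ranked alternatives in the cycle profile, together with one asymmetric tie profile, and check that the resulting $3\times 3$ system in the differences $a-b$, $b-c$, $c-d$ has only the zero solution when $0<p<1$. If it does not, I will add the strict-inequality constraints coming from cases where the clone could win outright.

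For (iii), with $p=1$ the clone sits immediately below $x$ for every voter. Under Plurality, $x'$ then scores $0$ and every original score is unchanged, which gives sufficiency. For necessity I will induct on $m$. Given $s^m=(1,0,\dots,0)$, the same tie equations with $p=1$ force $s^{m+1}_k=s^{m+1}_{k+1}$ for all $k\ge 2$, and the strict inequalities give $s^{m+1}_1>s^{m+1}_2$, so $s^{m+1}$ is Plurality. The case $p=0$ follows by the symmetric argument, which yields inverse Plurality.
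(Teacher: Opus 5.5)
The genuine gap is in (ii), which is the heart of the theorem: you never identify a profile that makes your $3\times 3$ system nonsingular, and the candidates you name do not do it. Your reformulation of $p$-CR is correct, and your treatment of (i) is sound. The two-voter tie gives $v=(1-p)u+pw$, the unanimous profile gives $u>w$, and $S_x-S_{x'}=(2p-1)\bigl(p\,n_{xy}+(1-p)\,n_{yx}\bigr)$ under $d_p^3$ supplies the threshold. For (iii), the general-$m$ tie equations must come from the cyclic profile in which every original occupies every rank equally often; at $p=1$ this yields $(a,c,\ldots,c)$.

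Now to (ii). The cyclic profile is invariant under rotation, so cloning any of its alternatives gives the same two equations, $b=c=(1-p)a+pd$; there is no distinguished middle or bottom alternative to clone. These equations only confine $s^4$ to the span of $(1,1,1,1)$ and $d_p^4$. Reversal pairs add nothing either. For $x\succ y\succ z$ and $z\succ y\succ x$, cloning $x$ gives $(1-p)b+pc=(1-p)a+pd$, which is already implied.

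The missing idea comes from a direct computation. Under $d_p^4$, cloning $x$ leaves every other original's score exactly equal to its $d_p^3$ score. Meanwhile $x$'s score changes by $p(1-p)\bigl(n^{(3)}_x-n^{(1)}_x\bigr)$, where $n^{(j)}_x$ is the number of voters ranking $x$ at position $j$. So any tied profile in which the cloned co-winner has equally many first and last places is useless. You need a co-winner with more first places than last places, such as $q$ voters $x\succ z\succ y$ and $q$ voters $z\succ x\succ y$ (the profile of Figure~\ref{fig:obstruction}).

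Cloning $x$ there gives the third equation $b=(1-p)a+pc$. Combined with $b=c=(1-p)a+pd$, it forces $a=b=c=d$ for $p\in(0,1)$. Equivalently, $S_z(\widetilde\succ)-S_x(\widetilde\succ)=qp(1-p)>0$ under $d_p^4$. The reversed profile, with voters $y\succ z\succ x$ and $y\succ x\succ z$, handles $e_p^3$ in the same way: it yields $c=(1-p)b+pd$. Your fallback to strict-inequality constraints would need the same observation. As written, (ii) is therefore a plan rather than a proof.
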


\begin{proof}
For $m \,=\, 2$, every vector is constant or equivalent to $(1,0)$ or $(0,1)$. An extension from one to two alternatives preserves $x$ under $(1,0)$ exactly when $p \,\ge\, 1/2$, and under $(0,1)$ exactly when $p \,\le\, 1/2$. Now fix $m \,=\, k+1 \,\ge\, 3$ and take original alternatives $a_1,\ldots,a_k$. Choose a positive integer $q$ such that $pq$ is an integer. For each $j \,=\, 1,\ldots,k$, include $q$ voters with order
\[
a_j\succ a_{j+1}\succ\cdots\succ a_{j+k-1},
\]
where $a_{k+\ell} \,=\, a_\ell$. Each alternative then appears $q$ times at every rank, so all alternatives tie. Add a clone $a_1'$ of $a_1$ and write the new $m$-dimensional vector as $(a,b_2,\ldots,b_k,c)$. All original alternatives must remain chosen. Comparing $a_j$ with $a_{j+1}$ for $j \,=\, 2,\ldots,k-1$, and then comparing $a_1$ with $a_2$, gives
\begin{equation}\label{eq:shape}
b_2 \,=\, \cdots \,=\, b_k \,=\, b,\qquad b \,=\, (1-p)a+pc.
\end{equation}
The score difference between $a_1$ and $a_1'$ is $q(2p-1)(a-c)$. Since $a_1$ must remain chosen, $a \,\ge\, c$ when $p \,>\, 1/2$ and $a \,\le\, c$ when $p \,<\, 1/2$. Equation~\eqref{eq:shape} therefore gives a constant vector or $d_p^m$ in the first case, and a constant vector or $e_p^m$ in the second. When $p \,=\, 1/2$, it gives a constant vector, $d_{1/2}^m$, or $e_{1/2}^m$.

For three alternatives and $p \,\ge\, 1/2$, use $(1,0)$ before entry and $d_p^3$ after entry. In Figure~\ref{fig:binary-extension}, $u$ voters rank $x$ above $y$ and $v$ voters rank $y$ above $x$. The resulting score differences are
\[
S_x-S_y \,=\, (1-p+p^2)(u-v),\qquad
S_x-S_{x'} \,=\, (2p-1)(pu+(1-p)v).
\]
Since $1-p+p^2 \,>\, 0$ and $2p-1 \,\ge\, 0$, these equalities establish \textit{$p$-CR}. Reversing every ranking and vector proves the case $p \,<\, 1/2$ for $(0,1)$ and $e_p^3$. Profiles in which all voters have the same ranking rule out all remaining pairs of vectors in successive dimensions. These include a constant with a nonconstant vector, $(0,1)$ with $d_p^3$, $(1,0)$ with $e_p^3$, and, at $p \,=\, 1/2$, $d_{1/2}^{m-1}$ with $e_{1/2}^m$ or conversely.

Next let $p \,\in [1/2,1)$ and choose $q$ so that $pq$ and $(1-p)q$ are integers. In Figure~\ref{fig:obstruction}, $x$ and $z$ tie under $d_p^3$ before entry. Under $d_p^4$ after entry,
\begin{equation}\label{eq:gap}
S_z(\widetilde\succ)-S_x(\widetilde\succ) \,=\, qp(1-p) \,>\, 0.
\end{equation}
Thus $x$ ceases to be chosen. The profile in which all voters have the same ranking excludes a constant four-alternative vector. At $p \,=\, 1/2$, reversing every ranking and vector in the example also excludes $e_{1/2}^4$. For $p \,<\, 1/2$, reversing every ranking and vector in the example for $1-p$ excludes $e_p^4$. Hence only the Trivial rule remains for $p \,\in (0,1)$. At $p \,=\, 1$, equation~\eqref{eq:shape} gives the Plurality rule vectors, and entry below $x$ preserves every original first-place count. At $p \,=\, 0$, it gives the inverse-Plurality rule vectors, and entry above $x$ preserves every original last-place count.
\end{proof}

\begin{figure}[H]
\centering
\begin{minipage}[c]{0.4\textwidth}
\centering
\small Original profile\par\medskip
\profiletable{1.25cm}{3.10cm}
{\(\begin{array}{c}q\\q\end{array}\)}
{\(\begin{array}{ccc}\colorbox{blue!14}{$x$}&z&y\\z&\colorbox{blue!14}{$x$}&y\end{array}\)}
\end{minipage}
\hfill
\raisebox{0.15cm}{\tikz\draw[-{Stealth[length=2.2mm]},thick] (0,0)--(0.65,0);}
\hfill
\begin{minipage}[c]{0.49\textwidth}
\centering
\small $p$-clone extension\par\medskip
\profiletable{1.8cm}{3.65cm}
{\(\begin{array}{c}pq\\(1-p)q\\pq\\(1-p)q\end{array}\)}
{\(\begin{array}{cccc}
\colorbox{blue!14}{$x$}&\colorbox{orange!18}{$x'$}&z&y\\
\colorbox{orange!18}{$x'$}&\colorbox{blue!14}{$x$}&z&y\\
z&\colorbox{blue!14}{$x$}&\colorbox{orange!18}{$x'$}&y\\
z&\colorbox{orange!18}{$x'$}&\colorbox{blue!14}{$x$}&y
\end{array}\)}
\end{minipage}
\caption{The profile that prevents an extension to four alternatives.}
\label{fig:obstruction}
\end{figure}

\begin{proof}[Proof of Theorem~\ref{thm:plurality}]
Plurality rule satisfies all four axioms. Conversely, \textit{$1$-CI} implies \textit{$1$-CR}.
By \citet[Theorem~1]{young1975}, \textit{Anonymity}, \textit{Neutrality}, and \textit{Consistency} imply that the restriction of $F$ to each fixed alternative set is a simple or composite scoring rule.
The one-to-two argument above excludes the inverse Plurality rule for $m \,=\, 2$.
Apply \textit{$1$-CR} to the profile on $a_1,\ldots,a_k$ constructed in the preceding proof.
Before entry, every alternative ties under every component vector.
After cloning $a_1$ with $p \,=\, 1$, all original alternatives must still tie under every component vector.
Equation~\eqref{eq:shape} then shows that each such vector has the form $(a,c,\ldots,c)$.

If all component vectors are constant, the rule is Trivial. Otherwise, at the first nonconstant vector the score difference between $a_1$ and $a_1'$ is $q(a-c)$.
Since $a_1$ must remain chosen, $a \,>\, c$, so this vector is equivalent to the Plurality vector.
Every later component vector also depends only on first-place counts and therefore cannot break a tie left by that vector.
The rule on the fixed alternative set is consequently the Plurality rule or the Trivial rule.

The profile in which all voters have the same ranking also prevents using the Plurality rule for $m-1$ alternatives and the Trivial rule for $m$ alternatives, or conversely.
Thus the same rule applies for every $m$.
The Trivial rule violates \textit{$1$-CI} because it chooses $A$ before entry and $A\cup\{x'\}$ after entry.
Therefore, $F$ is the Plurality rule.
\end{proof}



\appendix
\section{Independence of the Axioms}\label{app:redundancy}
\begingroup
\renewcommand{\baselinestretch}{1.2}\normalsize

The following examples show that none of the four axioms in Theorem~\ref{thm:plurality} can be omitted.

\begin{itemize}
\item[] \textbf{Without \textit{Anonymity}.}
Fix unequal positive voter weights and choose the alternatives with the largest weighted number of first-place votes.
This rule satisfies \textit{Neutrality}, \textit{Consistency}, and \textit{$1$-Clone Invariance}.
It violates \textit{Anonymity} because exchanging two voters with different weights can change the choice set.

\item[] \textbf{Without \textit{Neutrality}.}
Fix unequal positive weights for the alternatives and multiply each alternative's first-place count by its weight.
The resulting rule satisfies \textit{Anonymity}, \textit{Consistency}, and \textit{$1$-Clone Invariance}.
It violates \textit{Neutrality} because relabelling two alternatives with different weights can change the choice set.

\item[] \textbf{Without \textit{Consistency}.}
Use the Plurality rule when the electorate has odd size.
When it has even size, choose every alternative that receives at least one first-place vote.
This rule satisfies \textit{Anonymity}, \textit{Neutrality}, and \textit{$1$-Clone Invariance}.
Consider two disjoint three-voter profiles with first-place counts $(2,1,0)$ and $(2,0,1)$ for $(a,b,c)$.
Each profile chooses $\{a\}$, whereas their union chooses $\{a,b,c\}$.
Thus, \textit{Consistency} fails.

\item[] \textbf{Without \textit{$1$-Clone Invariance}.}
Borda satisfies \textit{Anonymity}, \textit{Neutrality}, and \textit{Consistency}.
In a binary profile, let two voters rank $x$ above $y$ and three rank $y$ above $x$.
Borda chooses $\{y\}$.
After $x'$ is inserted immediately below $x$ for every voter, the scores are
\[
S_x(\widetilde\succ)=2\cdot2+3\cdot1=7
\quad\text{and}\quad
S_y(\widetilde\succ)=2\cdot0+3\cdot2=6.
\]
The choice set therefore changes, so Borda violates \textit{$1$-Clone Invariance}.
\end{itemize}
\endgroup

\section*{Declaration}

\subsection*{Conflict of interest}
The author declares that there are no conflicts of interest.

\subsection*{Acknowledgement}
I am grateful to Satoshi Nakada for his helpful comments and suggestions.

\subsection*{Disclosure on the use of AI}
The author used ChatGPT to assist with language editing and to identify possible algebraic or expository issues. All ideas and scientific content were conceived and developed by the author, who retains full responsibility for every claim made in the paper.

\begingroup
\renewcommand{\baselinestretch}{1.15}
\endgroup

\end{document}